\documentclass{article}
\pagestyle{plain}
\usepackage[letterpaper, portrait, margin=1in]{geometry}
\usepackage[utf8]{inputenc}
\usepackage{graphicx, caption}
\usepackage{amsfonts}
\usepackage{amsmath,bm}
\usepackage{cite}
\usepackage{IEEEtrantools}
\usepackage{amssymb}
\usepackage{amsthm}
\usepackage{color}
\usepackage{comment}
\usepackage{mathtools}
\usepackage{braket}
\usepackage{enumerate}
\usepackage{multirow}
\usepackage{hyperref}
\usepackage{balance}
\usepackage{flushend}
\usepackage{authblk}
\usepackage{tikz} 
\usetikzlibrary{quantikz}

\theoremstyle{plain}
\newtheorem{thm}{Theorem}%[section] 

\newtheorem{defn}[thm]{Definition} % definition numbers are dependent on theorem numbers
 % same for example numbers
 % same for example numbers

 % same for example numbers
 % same for example numbers
%\newtheorem{claim}[theorem]{Claim}

\newtheorem{lemma}[thm]{Lemma}
\newtheorem{corollary}[thm]{Corollary}

\newtheorem{example}[thm]{Example}
\newtheorem{remark}[thm]{Remark}

\captionsetup[figure]{labelfont={bf},name={Fig.},labelsep=period}
\captionsetup[table]{labelfont={bf},name={Table},labelsep=period}

\title{A no free lunch theorem for untrained quantum\\ circuits in machine learning}

\author{Steven Herbert}
\affil{\small \textit{Quantinuum, Terrington House, 13–15 Hills Road, Cambridge CB2 1NL, United Kingdom} \\ \textit{Department of Computer Science and Technology, University of Cambridge, United Kingdom}}

\begin{document}

\maketitle

\begin{abstract}
This paper proves that if an untrained quantum circuit is used as a resource in a machine learning workflow, then on average no quantum circuit is better than any other that can achieve the same set of computational effects. This is the titular no free lunch theorem. The paper also proves a supporting theorem that even if the idealisations of the no free lunch theorem are omitted, the average quantum advantage remains negligible at best. These results cast serious doubt on several proposals to use untrained quantum circuits in machine learning workflows: at best such claims should be substantiated empirically, as this paper proves there is no \textit{a priori} theoretical reason to suppose that introducing an untrained quantum circuit will increase performance.
\end{abstract}

\section{Introduction}
\label{sec:intro}

Over the past several years, various so-called ``quantum supremacy'' (also referred to as ``quantum advantage'') experiments have shown that, beyond reasonable doubt, universal quantum computers can achieve computational effects that are out of reach of classical computers in any reasonable time \cite{Google-supreme1, ZHU2022240, liu2025certified}. These experiments typically use random quantum circuits to sample probability distributions that are hard to prepare by classical means and, as such, they are far removed from useful applications. An early hope for \textit{useful} near-term quantum advantage was to instead train quantum circuits trained to do some particular task, thus delivering quantum advantage that is, by design, useful \cite{peruzzo2014vqe, mcclean2016theory, schuld2014quest, benedetti2019parameterized}. However, there is mounting evidence that the training is itself a sufficiently computationally-intensive procedure that the quantum advantage will be negated when the full cost is accounted for \cite{McClean2018,Bittel2021}.\makeatletter{\renewcommand*{\@makefnmark}{}
\footnotetext{Contact: sjh227@cam.ac.uk}\makeatother}

These two facts -- that quantum computers \textit{can} perform tasks that are out of reach of classical computers, but that training circuits for particular practically-interesting problems is computationally hard -- have in turn spawned the idea that \textit{untrained} quantum circuits may be useful as resources in otherwise classical machine learning workflows. In \textit{quantum resevoir computing} \cite{fujii2017quantumreservoir, Fujii2021} the input is passed through an untrained quantum circuit, whilst \textit{quantum random kitchen sinks} \cite{wilson2020quantum} is a similar idea. The central idea of these papers is that there is an innate advantage in mapping classical data into a high dimensional Hilbert space, even if that is via an untrained quantum circuit. In a similar vein, the hope offered in Ref.~\cite{OrcaBlog} is that ``since the distributions from which near-term quantum computers can efficiently sample are particularly rich and complex, they are promising candidates for representing latent spaces within generative models''.

Whilst these methods may be empirically shown to lead to performance improvements for certain specific problems, this paper emphatically refutes any suggestion that these methods have an \textit{a priori} advantage that is independent of the problem. In particular, if one treats \textit{untrained} as meaning Haar random, then we prove a that no circuit is better on average than any other that can achieve the same set of computational effects, almost surely. This is the titular \textit{no free lunch theorem}. Furthermore, even if one drops the idealisations required for the no free lunch theorem, simple counting arguments show that the space of computational effects is simply too large for any untrained quantum circuit to achieve anything more than (at best) a negligible advantage on average.

\subsection{Resource theories}

To explicate no free lunch theorems, it is first beneficial to introduce the concept of resource theories. For simplicity, it is further helpful to consider a simplified setting where the untrained quantum circuit provides a fixed quantum state that is measured and taken as the latent distribution of a generative model. We then ask whether it is the case that samples obtained by measuring a certain quantum state are more useful on average? This question is aptly formalised using resource theory \cite{resource1, resource2}: the task of generative modelling may be summarised as the task of converting samples from the latent distribution (i.e., some \textit{resource} randomness) into samples from the target distribution. We thus ask whether certain latent distributions reduce the average cost of preparing the family of target distributions, as this would mean such distributions have greater resource value (in this context).

According to the literature there is reason to suppose directly contradictory things about what would make a good resource for this purpose. On the one hand, the fact that there exist quantum states that sample classically hard-to-prepare distributions when measured suggests that complex highly-entangled states may have higher resource value (exactly the contention of Ref.~\cite{OrcaBlog}). On the other hand, the fact that a resource theory of \textit{uncomplexity} has recently been proved for quantum states in general \cite{YungerHalpern2022} hints simple, unentangled states may have higher resource value. (It should be noted that the resource theory of uncomplexity addresses a different question to ours: it is concerned with the resource value of quantum states that may then undergo further unitary evolution -- or in fact slightly noisy evolution -- whereas we are concerned with those states being measured to create classical randomness.)

With these two incompatible precedents in place, it is perhaps unsurprising that neither is true -- at least in the most general sense -- and in this paper we prove the aforementioned no free lunch theorem.

\subsection{No free lunch theorems}

The first no free lunch theorem was proposed in a paper by Wolpert, and concerned the problem of choosing between learning algorithms \cite{nfl1}. The term ``no free lunch theorem'' was then popularised in a paper of Wolpert and Macready \cite{nfl2}, on the problem of choosing between optimisation algorithms. This provides a suitable exemplar for no free lunch theorems in general: when averaging over the entire class of optimisation problems, no optimisation algorithm is better or worse than any other. This gives rise to, for example, the surprising implication that if one treats optimisation as minimisation in particular, then hill-climbing algorithms (which aim to make ``uphill moves'') perform equally well on average as gradient descent algorithms. 

This counter-intuitive result can be explained to some extent by the fact that, in practice, the entire space of minimisation problems is not what is of interest, and in fact there is some hidden structure -- for instance that the space of problems of practical interest is heavily biased in favour of those optimisation problems with surfaces exhibiting some smoothness -- in which case gradient descent algorithms do indeed out perform hill-climbing algorithms for minimisation. Nevertheless, the central premise of a no free lunch theorem is that there is no \textit{a priori} bias over the space of problems as so the cost must be averaged over the entire space with equal weighting. This is exactly the setting that is considered in the no free lunch theorem in this paper.

\section{A general structure for machine learning models with untrained quantum circuits as resources}
\label{sec:set-up}

A machine learning model is a map from an input, which may include uniform randomness, to an output. For instance, a generative model may take \textit{only} uniform randomness as an input and map this to a sample from the desired distribution. Alternatively, a binary classifier is a map from any suitable input variable to a single bit output. Although in modern computer science there are many levels of abstraction, on a digital (classical) computer fundamentally this map must be realised with gates from a suitable gateset, and the set $\{ \text{NAND}, \text{ FANOUT}\}$ is universal for classical computation. Our analysis treats the map as reversible, in which case the gateset $\{ X, \text{TOFFOLI} \}$ is sufficient, as this enables the reversible implementation of both NAND and FANOUT. In particular, as long as an ancilla qubit is availble (which is hereafter omitted from consideration) $\{ X, \text{TOFFOLI} \}$ can generate any permutation.

In the following analysis it is convenient to treat the permutation as acting on a quantum state, and for this we establish the following simple result:

\begin{lemma}
\label{gen-defer}
Circuits of the form:

\begin{center}
\begin{minipage}{0.35\textwidth} 
    \begin{center}
    \begin{quantikz}
            \lstick{$\ket{\psi}$} & \qwbundle{}  & \meter{} & \gate{ P } \hphantom{wide} & \qw
\end{quantikz} \\
$\textnormal{(a)}$  \end{center}
    \end{minipage}
    % Fill the space in between
    % minipage with image
    \begin{minipage}{0.35\textwidth}
    \begin{center}
    \begin{quantikz}
            \lstick{$\ket{\psi}$} & \qwbundle{}   & \gate{ P } \hphantom{wide} & \meter{} & \qw 
\end{quantikz}  \\
$\textnormal{(b)}$ \end{center}
    \end{minipage}
      %(a) &  & (b)
    %\end{tabular}
\end{center}

\noindent sample the same distribution, for any given input quantum state $\ket{\psi}$ and permutation, $P$.
\end{lemma}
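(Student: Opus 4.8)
The plan is to exploit the one structural feature that distinguishes $P$ from an arbitrary unitary: because $P$ is a \emph{permutational} circuit (built from Toffoli and $X$), the unitary it implements sends each computational basis state to another computational basis state, $P\ket{i} = \ket{\pi(i)}$ for some permutation $\pi$ of the index set. The strategy is then simply to write down the output distribution of each circuit explicitly and observe that they coincide.

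First I would expand the input in the computational basis as $\ket{\psi} = \sum_i \alpha_i \ket{i}$. For circuit (a), measuring first returns outcome $i$ with probability $|\alpha_i|^2$, after which $P$ acts on the classical basis state $\ket{i}$ deterministically as $i \mapsto \pi(i)$; hence the probability of observing final output $j$ equals the probability that the measurement returned $\pi^{-1}(j)$, namely $|\alpha_{\pi^{-1}(j)}|^2$. For circuit (b), applying $P$ first gives $P\ket{\psi} = \sum_i \alpha_i \ket{\pi(i)} = \sum_j \alpha_{\pi^{-1}(j)} \ket{j}$ (a mere relabelling of the sum), and a computational basis measurement of this state returns $j$ with probability $|\alpha_{\pi^{-1}(j)}|^2$ by the Born rule. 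Since the two expressions agree for every $j$, the circuits sample identical distributions.

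I do not expect a genuine obstacle here: the result is essentially immediate once one notes that a permutation of the computational basis states neither creates nor destroys superposition \emph{across} that basis, so the measurement probabilities are simply permuted rather than mixed together. The single point worth stating carefully is precisely \emph{why} this argument is special to permutational $P$ and does not extend to general unitaries --- an arbitrary unitary would rotate $\ket{\psi}$ into superpositions that interfere under the measurement, so that measuring before versus after would yield genuinely different outcome statistics. It is the absence of any such off-diagonal mixing for permutation matrices that makes the two circuits interchangeable, and this observation is the only substantive content of the lemma.
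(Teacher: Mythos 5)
Your proposal is correct and follows essentially the same route as the paper: both proofs compute the output distribution of each circuit and note that, because $P$ permutes computational basis states, outcome $j$ in circuit (a) occurs exactly when the measurement yields $\pi^{-1}(j)$, giving the same probability $|\bra{j}P\ket{\psi}|^2$ as circuit (b). The paper phrases this with $\ket{y}=P^\dagger\ket{x}$ rather than explicit coefficients $\alpha_{\pi^{-1}(j)}$, but the argument is the same.
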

\begin{proof}
Consider the probability of measuring some output bitstring $x$. First addressing circuit (b), in which case the probability is $|\bra{x} P \ket{\psi}|^2$. Now consider the circuit (a), in which case the probability of sampling the bitstring $x$ is $|\braket{y | \psi}|^2$ where $\ket{y} = P^\dagger \ket{x}$. That is, because $P$ is a permutation, $x$ will be sampled if and only if the measurement outcome is $y$ such that $P\ket{y} = PP^\dagger \ket{x} = \ket{x}$. The probability of this happening is:
\begin{equation}
|\braket{y | \psi}|^2 = |(\ket{y})^\dagger \ket{\psi}|^2 = |(P^\dagger \ket{x})^\dagger \ket{\psi}|^2 = |\bra{x} P \ket{\psi}|^2
\end{equation}
and thus the probability of measuring any bitstring, $x$, is identical for the two circuits (a) and (b).
\end{proof}

With this established, there are two essential ways in which an untrained quantum circuit can act as a resource in a machine learning model. Firstly, the input could be passed through the quantum circuit, with the output measured and then fed into the classical machine learning model; secondly, the quantum circuit could just produce a static ``resource'' state that does not vary with the input and instead the same measured state is always fed into the machine learning model. To distinguish these, we henceforth refer to the former as a quantum circuit acting as a resource, and the latter as a quantum state acting as a resource. We now formally define these, standardising the ``placement'' and implicitly parameterising the sizes of the input $\ket{x}$, uniform randomness and workspace (qu)bits initialised in the state $\ket{0}$, as well as the output $\ket{y}$. Note that $\ket{\tilde{y}}$ is included for completeness, and is the (qu)bits not included in the output (and so may be of size zero if every output (qu)bit forms part of the output itself).

\begin{defn}
\label{def1a}
A machine learning model with a quantum circuit acting as a resource is a map of the form: 
\begin{center}\begin{quantikz}
            \lstick{$\ket{0}$} & \qwbundle{n_0}  & \qw & \qw  & \gate[4]{ P } \hphantom{wide}  & \qwbundle{n_y}  \rstick{$\ket{y}$} \\  
            \lstick{$\ket{0}$} & \qwbundle{n_q}  & \gate[2]{ U } \hphantom{wide} & \meter{} & &   & \\      
            \lstick{$\ket{x}$} & \qwbundle{n_x}  &   & \meter{}   & & \\ 
            \lstick{$\ket{+}$} & \qwbundle{n_+}   & \qw & \meter{} & & \qw  \rstick{$\ket{\tilde{y}}$}   \\
\end{quantikz}  $=$
\begin{quantikz}
            \lstick{$\ket{0}$} & \qwbundle{n_0}  & \qw & \gate[4]{ P } \hphantom{wide}   & \qwbundle{n_y}  & \meter{} \rstick{$\ket{y}$} \\  
            \lstick{$\ket{0}$} & \qwbundle{n_q}  & \gate[2]{ U } \hphantom{wide} & & &    \\      
            \lstick{$\ket{x}$} & \qwbundle{n_x}  &   & & & \\  
            \lstick{$\ket{+}$} & \qwbundle{n_+}   & \qw &  & \qw & \meter{}  \rstick{$\ket{\tilde{y}}$}   
\end{quantikz} 
\end{center}
where $P$ is any permutation.
\end{defn}

\begin{defn}
\label{def1b}
A machine learning model with a quantum state acting as a resource is a map of the form:
\begin{center}\begin{quantikz}
            \lstick{$\ket{0}$} & \qwbundle{n_0}  & \qw & \qw  & \gate[4]{ P } \hphantom{wide}  & \qwbundle{n_y}  \rstick{$\ket{y}$} \\  
            \lstick{$\ket{0}$} & \qwbundle{n_q}  & \gate[1]{ U }  & \meter{} & &   & \\   
            \lstick{$\ket{+}$} & \qwbundle{n_+}   & \qw & \meter{} & & & \\  
            \lstick{$\ket{x}$} & \qwbundle{n_x}  & \qw  & \qw   & & \qw  \rstick{$\ket{\tilde{y}}$}   
\end{quantikz}  $=$
\begin{quantikz}
            \lstick{$\ket{0}$} & \qwbundle{n_0}  & \qw & \gate[4]{ P } \hphantom{wide}   & \qwbundle{n_y}  & \meter{} \rstick{$\ket{y}$} \\  
            \lstick{$\ket{0}$} & \qwbundle{n_q}  & \gate[1]{ U }  & & &    \\    
            \lstick{$\ket{+}$} & \qwbundle{n_+}   & \qw &  & &  \\
            \lstick{$\ket{x}$} & \qwbundle{n_x}  & \qw  & & \qw  & \meter{}  \rstick{$\ket{\tilde{y}}$}   
\end{quantikz} 
\end{center}
where $P$ is any permutation.
\end{defn}

\begin{remark}
    The equivalence in these definition is a direct implication of Lemma~\ref{gen-defer}.
\end{remark}

\begin{remark}
    These model structures may be thought of capturing every machine learning workflow \textbf{up to and including} the stated size (in terms of $n_0$, $n_+$, $n_q$, $n_x$ and $n_y$) as certain (qu)bits can just be left unused.
\end{remark}

Why is it that these model structures wholly capture all the ways in which an untrained quantum circuit can act as a resource in a machine learning model? First, if the input is classically pre-processed ahead of passing through the quantum circuit, then this can be treated as a permutation preceding $U$ that is then absorbed thereinto. Finally, Refs.~\cite{fujii2017quantumreservoir, Fujii2021, wilson2020quantum} all take observable measurements of the untrained quantum circuit as the input to the classical machine learning model, however this is also captured as the circuit $U$ could first copy the \textit{classical} input $\ket{x}$, and then operate on the various copies in parallel, with the observable estimation also being absorbed into $U$.

\section{A no free lunch theorem for quantum circuits as resources in machine learning models}

%As Def
\subsection{The cost of preparing the family of target distributions}
\label{subsec:cost-target}

To prove the no free lunch theorem, it is first incumbent upon us to define what exactly is meant by the cost incurred by each model. Let $n = n_0 + n_+ n_q + n_x$ be the total number of (qu)bits, and further let $\mathcal{P}_n$ be the elements of the symmetric group $S_n$ represented as matrices. Further let $c(P)$ be the \textit{cost} of preparing some $P \in \mathcal{P}_n$ (for given $n$). $c(\cdot)$ can be defined as any cost that depends only on $P$, and for instance could be,
\begin{enumerate}[(i)]
    \item the minumum number of gates from some gateset (that is universal for permutations) that are required to implement $P$ as a circuit;
    \item a binary value $0$ if a circuit to prepare $P$ with fewer gates than some threshold value exists, and $1$ otherwise.
\end{enumerate}

We further need to define a notion of equivalence of permutations, which is model-dependent. 
\begin{defn}
\label{def:equivalence}
Let $\mathcal{Z}$ be a set of input states of size $n$ (which could be a singleton), then let $\mathcal{P}_{\mathcal{Z}, n_y}$ be a set of \textbf{permutation equivalence classes}, where 
    \begin{align}
        \forall P, Q \in \mathcal{P}_n, \exists \mathrm{P} \in \mathcal{P}_{\mathcal{Z}, n_y}, & \, P, Q \in \mathrm{P} \nonumber \\
        & \iff
        \forall \ket{z} \in \mathcal{Z}, \forall y \in \{0,1\}^{n_y}, \, \sum_{\tilde{y} \in \{0,1\}^{n - n_y}}
        | \bra{z} P \ket{y \tilde{y}}|^2 = \sum_{\tilde{y} \in \{0,1\}^{n - n_y}}
        | \bra{z} Q \ket{y \tilde{y}}|^2
    \end{align}
    That is, for every input, the output statistics are the same whichever of $P$ or $Q$ is applied if and only if they are in the same equivalence class. 
\end{defn}
It is trivial to show that these do indeed constitute equivalence classes. The permutation equivalence classes should be interpreted as the computational effects that can be achieved with any given model, and as such only one permutation from each equivalence class need be prepared. It is most efficient to choose the minimal possibility (in terms of the defined cost), and this gives a natural definition of the cost of an equivalence class, and hence the overall cost.

\begin{defn}
\label{def:cost}
    Let $M$ be a positive integer such that for some $\mathcal{Z}$ and $n_y$, $\mathcal{P}_{\mathcal{Z}, n_y} = \{\mathrm{P}_1, \mathrm{P}_2, \dots, \mathrm{P}_M \}$, then define the cost, $\tilde{c}$ for each equivalence class as for every $i \in \{ 1, M\}$,
    \begin{equation}
        \tilde{c}(\mathrm{P}_i) = \min_{\substack{P \in \mathrm{P}_i}} c(P)
    \end{equation}
    and further define the overall cost as
    \begin{equation}
    \label{eq:cost-gen-modela}
        C(\mathrm{P}_1, \mathrm{P}_2, \dots, \mathrm{P}_{M}) = \frac{1}{M} \sum_{i=1}^{M} \tilde{c}(\mathrm{P}_i)
    \end{equation}
\end{defn}
Note that below we will usually drop the parameterisation and refer to the overall cost simply as $C$.
\begin{remark}
    The no free lunch theorem holds for any $C(\mathrm{P}_1, \mathrm{P}_2, \dots, \mathrm{P}_{M})$ that is symmetric in its arguments.
\end{remark}

\subsection{Preliminaries}
\label{sec:prelims}

We now give some further preliminary results needed for the no free lunch theorem. In the first instance (before later generalisation) we consider the case where $n_x = 0$, and so the models of Definitions~\ref{def1a} and \ref{def1b} are equivalent. This has the operational interpretation of a generative model, where a sample obtained by measuring a quantum state is (part of) in the latent space. As such, we define $\ket{\psi} = U \ket{0^{n_q}}$, and refer to this as the ``resource state''.

In the following analysis we frequently interchange between quantum states and the corresponding complex unit vectors, and hence it is convenient to conflate these to some extent. In particular, the subscript $i$ in $\ket{\psi}_i$ is to be read as referring to the $i$th element of this vector of the state expressed in the computational basis. In particular, as we are considering the case where $n_x = 0$, the input to $P$ is the state / vector:

    \begin{equation}
    \label{eqn:input-state}
        \ket{\tilde{\psi}}=\frac{1}{\sqrt{2^{n_+}}} \Bigg[ \,\,\,\, \underbrace{\ket{\psi}_1, \ket{\psi}_1, \dots \ket{\psi}_1}_{2^{n_+} \text{ times}}, \,\,\,\, \underbrace{\ket{\psi}_2, \ket{\psi}_2, \dots \ket{\psi}_2}_{2^{n_+} \text{ times}}, \,\,\,\, \dots \underbrace{\ket{\psi}_{2^{n_q}}, \,\,\,\, \ket{\psi}_{2^{n_q}}, \dots \ket{\psi}_{2^{n_q}}}_{2^{n_+} \text{ times}}, \underbrace{0, 0, \dots, 0}_{N-2^{n_+ + n_q} \text{ times}} \Bigg]^T
    \end{equation}
where $N = 2^{n}$, i.e., $\tilde{\psi}$ is an $N$-element vector.

It is further helpful to define:

\begin{defn}
    Let $\{ x_i \}$ be a set of complex numbers. Call $\sqrt{ \sum_i |x_i|^2 }$ the $\ell_2$ sum of $x$ 
\end{defn}
Usually in the following, the set will be the some or all of the elements of some quantum state expressed as a computational basis state vector. We now move on to some important properties of quantum states, that we will later see must be met in the no free lunch theorem.

\begin{defn}
\label{def:distinct}
Call a $n$-qubit quantum state distinct if no two elements of the corresponding $2^n$-element computational basis vector have the same magnitude.
\end{defn}

\begin{defn}
\label{def:strong-distinct}  
    Let $\Phi$ be a multiset whose elements are the elements of $D$ copies of some $n$-qubit state, $\ket{\psi}$ expressed as a computational basis state vector. Call $\ket{\psi}$ $D$-distinct if the $\ell_2$ sum of two sub-multisets of $\Phi$ being equal implies that the sub-multisets are themselves equal.
\end{defn}

\begin{example}
    Let $\ket{\psi} = [ \sqrt{1 / 3}, \sqrt{2 / 3}]^T$, then $\ket{\psi}$ is 1-distinct, but not 2-distinct. The former case is immediate, and in the latter case $\Phi = \{ \sqrt{1 / 3}, \sqrt{2 / 3}, \sqrt{1 / 3}, \sqrt{2 / 3} \}$, so one has unequal sub-multisets $\{\sqrt{2 / 3} \} $ and $\{ \sqrt{1 / 3}, \sqrt{1 / 3} \}$ whose elements $\ell_2$ sum to the same value.
\end{example}

We next define two sets of block matrices, $\mathcal{V}$ and $\mathcal{W}$.

\begin{defn}
$\mathcal{V}$ is the set of $N \times N$ matrices, where $V \in \mathcal{V}$ if and only if:
\begin{equation}
\label{eqn:q-eqn}
    V = \begin{bmatrix} \tilde{V}_1 & 0 & \dots & 0 & 0 \\ 0 & \tilde{V}_2 & \dots & 0 & 0  \\ \vdots & \vdots & \ddots & 0 & 0 \\ 0 & 0 & 0 & \tilde{V}_{2^{n_q}} & 0 \\ 0 & 0 & 0 & 0 & \tilde{V}_{2^{n_q}+1} \end{bmatrix} 
\end{equation}
where $\tilde{V}_1 \dots \tilde{V}_{2^{n_q}}$ are any $2^{n_+} \times 2^{n_+}$ permutation matrices, and $\tilde{V}_{2^{n_q}+1}$ is any $(N-2^{n_+ + n_q}) \times (N-2^{n_+ + n_q})$ permutation matrix. 
\end{defn}
\begin{remark}
\label{rem:group1}
    $\mathcal{V}$ forms a group under matrix multiplication, and in particular is closed under multiplication and inversion.
\end{remark}
\begin{defn}
$\mathcal{W}$ is the set of $N \times N$ block diagonal matrices where $W \in \mathcal{W}$ if and only if:
\begin{equation}
\label{eqn:s-eqn}
    W = \begin{bmatrix} \tilde{W}_1 & 0 & \dots & 0  \\ 0 & \tilde{W}_2 & \dots & 0   \\ \vdots & \vdots & \ddots & 0  \\ 0 & 0 & 0 & \tilde{W}_{2^{n_y}}  \end{bmatrix} 
\end{equation}
where $\tilde{W}_1 \dots \tilde{W}_{2^{n_y}}$ are any $2^{n-n_y} \times 2^{n-n_y}$ permutation matrices.
\end{defn}
\begin{remark}
\label{rem:group2}
    $\mathcal{W}$ forms a group under matrix multiplication, and in particular is closed under multiplication and inversion.
\end{remark}

For states / vectors to which matrices in $\mathcal{W}$ are applied, it is helpful to think of these vectors as themselves being (implicitly) decomposed into blocks of size $2^{n - n_y}$, such that only elements in the same block are permuted. We now use these classes of matrices to define a second notion of equivalence:

\begin{defn}
\label{def:multipl-equiv}
    Let $P$ and $Q$ be $N \times N$ permutation matrices. We say $P$ and $Q$ belong to the same \textbf{multiplicative equivalance class} if $P$ can be expressed as $P=WQV$ for some $W \in \mathcal{W}$ and $V \in \mathcal{V}$.
\end{defn}

\begin{remark}
\label{rem:multi-equiv-membership}
    Specifying $n_0$, $n_+$, $n_q$, and $n_y$ wholly defines the multiplicative equivalence classes.
\end{remark}

The essence of the no free lunch theorem is to show that the multplicative and permutation equivalence classes coincide in the setting of interest, and hence that the total cost is independent of the resource state, $\ket{\psi}$. For this, most of the heavy-lifting is done in the following lemma.

\begin{lemma}
    \label{lem4}
Let $\ket{\psi}$ be an $n_q$-qubit quantum state, and $\mathcal{Z} = \{ \ket{\tilde{\psi}} \}$ for $\ket{\tilde{\psi}}$ defined in (\ref{eqn:input-state}), such that permutation equivalence classes are defined as in Definition~\ref{def:equivalence}. For two $N \times N$ permutation matrices $P$ and $Q$:
\begin{enumerate}[(i)]
    \item If $P$ and $Q$ are in the same multiplicative equivalence class, then they are in the same permutation equivalence class.
    \item If $\ket{\psi}$ is distinct and $2^{n_+}$-distinct, if $P$ and $Q$ are in the same permutation equivalence class, then they are in the same multiplicative equivalence class.
\end{enumerate}
\end{lemma}
\begin{remark}
    Therefore for any $\ket{\psi}$ which is distinct and $2^{n_+}$-distinct, it follows that $P$ and $Q$ are in the same permutation equivalence class if and only if they are in the same multiplicative equivalence class. 
\end{remark}
\begin{proof}
    Using the decomposition $P = WQV$ for some $W \in \mathcal{W}$ and $V \in \mathcal{V}$ (by multiplicative equivalence) and considering $P \ket{\tilde{\psi}}$, we can see $V\ket{\tilde{\psi}} = \ket{\tilde{\psi}}$ as the operation of $V$ is simply to permute elements that are, by construction, equal. We also have $V\ket{\tilde{\psi}} = \ket{\tilde{\psi}} \implies QV \ket{\tilde{\psi}} = Q \ket{\tilde{\psi}} \implies P \ket{\tilde{\psi}} = WQV \ket{\tilde{\psi}} = WQ \ket{\tilde{\psi}}$. Owing to the fact that only the first $n_y$ qubits are measured to form the sample, any permutation of the elements $\{c \times 2^{n-n_y} + 1, (c+1) \times 2^{n-n_y}\}$ of the prepared state for any $c \in \{0, 2^{n_y}-1 \}$ does not change the measurement statistics. So it follows that for any $W \in \mathcal{W}$ and permutation matrices $S$ and $T$, $WT$ and $S$ are in the same permutation equivalence class if and only if $T$ and $S$ are in the same permutation equivalence class. Together with the fact, established above, $P \ket{\tilde{\psi}} = WQ \ket{\tilde{\psi}}$, this is enough to conclude claim (i).

    Turning to claim (ii), define $T$, a permutation, such that $P = TQ$. Without loss of generality let $T$ be expressed as a product of alternating terms $\{W_i\}$ and $\{T_i\}$ where $\forall i, \, W_i \in \mathcal{W}$, i.e., it just swaps elements in each $2^{n - n_y}$ block, and $\forall i , \, T_i$ is a expressed as a product of transpositions which swap elements in different $2^{n - n_y}$ blocks. That is, $T = \prod_i W_i T_i$. Every $T_i W_{j}$ can be written as equal to $W_{j} \tilde{T}_i$ where $\tilde{T}_i$ itself is a product of transpositions which swap elements in different $2^{n - n_y}$ blocks. Doing this multiple times, one gets $T = W \tilde{T}$ where $W = \prod_i W_i$ and is in $\mathcal{W}$ by closure under multiplication, and $\tilde{T} = \prod_i \tilde{T}_i$ is a product of transpositions which swap elements in different $2^{n - n_y}$ blocks.

    Returning to the definition $P = TQ = W \tilde{T} Q$, it is the case that $P$ and and $\tilde{T} Q$ are in same permutation equivalence class, by the same rationale given in the proof of claim (i), above. However, recall that $\tilde{T}$ is a product of transpositions of elements that swap elements in different $2^{n - n_y}$ blocks, by $2^{n_+}$-distinctness, $\tilde{T} Q$ and $P$ can only prepare the states with the same measurement statistics (i.e., be permutation equivalent) if the overall effect of $\tilde{T}$ is to swap elements of the same value between blocks. This in turn means that $\tilde{T}$ can be written $\tilde{T} = \tilde{W} \tilde{V}$ where $\tilde{W} \in \mathcal{W}$ and $\tilde{V}$ is a product of transpositions swapping equal-valued elements. 

    Thus we have $P = W \tilde{T} Q = W \tilde{W} \tilde{V} Q$, and the final step is to notice that writing $\tilde{V} Q = Q V$, then $V \in \mathcal{V}$ as it swaps elements of the same value in the original ordering of the input, and by distinctness these must therefore be (scaled) repeats of the same element of $\ket{\psi}$ once the tensor product is taken. Thus $P =  W \tilde{W} S V$ where $V \in \mathcal{V}$ and $W, \tilde{W} \in \mathcal{W} \implies W \tilde{W} \in \mathcal{W}$ (by closure under matrix multiplication), completing the proof of claim (ii).
\end{proof}

An implication of this lemma is that the number of multiplicative equivalence classes upper-bounds the number of permutation equivalence classes, and so the multiplicative equivalence classes can be thought of as the maximum set of computational effects that can be achieved.

\subsection{Main result: the no free lunch theorem}
\label{sec:main}

We are now ready to prove the no free lunch theorem, starting with a formal definition thereof.

\begin{defn}
    Say there is \textbf{no free lunch} if for a quantum circuit, $U$, used as a resource in a machine learning model in the sense of Definition~\ref{def1a} or \ref{def1b}, any other quantum circuit $\tilde{U}$ that can achieve the same set of computational effects, does so with the same overall cost, as defined in Def~\ref{def:cost}. 
\end{defn}

\begin{thm}
\label{nfl1}
For a machine learning model with $n_x = 0$ as defined in either Definition~\ref{def1a} or \ref{def1b} (which are equivalent in this case), if $U \ket{0}$ is distinct and $2^{n_+}$-distinct, then there is no free lunch
\end{thm}
\begin{proof}
    By Lemma~\ref{lem4}, as $\ket{\psi} = U \ket{0}$ is distinct and $2^{n_+}$-distinct, it follows that multiplicative equivalence classes are the same as the permutation equivalence classes, and so one permutation from each multiplicative equivalence class is needed in the the overall cost defined in Definition~\ref{def:cost}. However, by Remark~\ref{rem:multi-equiv-membership}, the multiplicative equivalence classes depend only on the parameters $n_0$, $n_+$, $n_q$ and $n_y$ and are independent of the resource state. As the maximum number of computational effects that can be achieved is one for each multiplicative equivalence class, the same conclusion holds for every alternative resource state that achieves the same set of computational effects, and so there is no free lunch.
\end{proof}

We now turn to the generalisations of the no free lunch theorem for cases where the input is non-empty (i.e., $n_x > 0$), and for this we again use the fact that if the set of computational effects (i.e., the set of permutation equivalence classes) can be defined such that it is independent of the input state, then we can conclude that there is no free lunch.

\begin{corollary}
For a machine learning model as defined in Definition~\ref{def1a}, with any $n_x$, if $U (\ket{0}\ket{x})$ is distinct and $2^{n_+}$-distinct for at least one $\ket{x}$, then there is no free lunch.
\end{corollary}
\begin{proof}
Let the input $\ket{x^*}$ be such that $\ket{\psi} = U (\ket{0}\ket{x^*})$ is distinct and $2^{n_+}$-distinct. If the input were instead fixed, then  Theorem~\ref{nfl1} would apply. This means that in order to obtain the full range of computational effects as $\ket{x}$ is allowed to vary as an input, it remains the case that one member of each multiplicative equivalence class must be chosen, and so the no free lunch theorem still applies in this case.
\end{proof}

\begin{corollary}
For a machine learning model as defined in Definition~\ref{def1b}, with any $n_x$, if $U\ket{0}$ is distinct and $2^{n_+}$-distinct, then there is no free lunch.
\end{corollary}
\begin{proof}
Consider a smaller model structure with the same $U$, $n_0$, $n_+$ and $n_y$, but with $n_x = 0$ and define $\tilde{n} = n_0 + n_+ + n_q$. Let $\tilde{\mathcal{Z}} = \{ \ket{\tilde{\psi}} \}$ where $\ket{\psi} = U \ket{0^{n_q}}$ and $\ket{\tilde{\psi}}$ is defined as in (\ref{eqn:input-state}). If we now consider the full model structure with input $\ket{x}$, such that $\mathcal{Z} = \{ \ket{0^{n_0}} \ket{\psi} \ket{+^{n_+}} \ket{x} \}_{ x \in \{0,1\}^{n_x}}$, then we have the set of permutation equivalence classes
\begin{equation}
\label{erqn:8}
    \mathcal{P}_{\mathcal{Z}, n} = \left\{ \sum_x \mathrm{P}_x \otimes \ket{x} \bra{x} \right\}_{\mathrm{P}_x \in \mathcal{P}_{\tilde{\mathcal{Z}}, \tilde{n}}}
\end{equation}
By distinctness and $2^{n_+}$-distinctness, Theorem~\ref{nfl1} applies to the smaller model, and so the equivalence classes of $\mathcal{P}_{\tilde{\mathcal{Z}}, \tilde{n}}$ coincide with the multiplicative equivalence classes, and are independent of the resource state. By (\ref{erqn:8}), this property is inherited by the equivalence classes of $\mathcal{P}_{\mathcal{Z}, n}$ and so there is no free lunch.
\end{proof}

\subsection{Equating ``untrained'' with ``Haar random''}

The no free lunch theorems rely on the fact that the resource state is distinct and $2^{n_+}$ distinct, and it is worth pausing to consider whether this is a realistic and appropriate restriction. We will now see that in one sense at least, it is. In the spirit of no free lunch theorems, we may consider the circuit that prepares the resource state / acts on the input as not merely untrained, but randomly chosen. A suitable and standard notion of randomness of unitary operators is sampling according to the Haar measure.
\begin{defn}[Ref.~\cite{mezzadri2007generate}]
\label{def:haar-U}
    A Haar random unitary, $U$, can be constructed by first filling an appropriately sized matrix, $A$, with elements of the form $a_{i,j} + b_{i,j}\mathrm{i}$ where $a_{i,j}$ and $b_{i,j}$ are sampled independently, and identically distributed (iid) from a unit Gaussian\footnote{That is, sampling from the Ginibre ensemble \cite[Section 3]{mezzadri2007generate}} (and $\mathrm{i}= \sqrt{-1}$); then performing a QR decomposition on $A$, that is such that $A = QR$ where $Q$ is unitary and $R$ is upper-triangular; and finally setting $U = Q \Lambda$ where $\Lambda = \textnormal{diag} (R_{ii}/|R_{ii}|)$.
\end{defn}

\begin{defn}
\label{def:haar-state}
    An $n$-qubit Haar random state may be sampled by applying a $2^n \times 2^n$ Haar random unitary to any fixed $n$-qubit input state.
\end{defn}
Visually, a Haar random state can be thought of as uniformally sampling over the surface of the a unit-hypersphere in the Hilbert space of the relevant dimension.

\begin{lemma}
\label{lem1}
Let $\ket{\psi}$ be a Haar random state, then $\ket{\psi}$ is distinct and $D$-distinct for any finite constant $D$, almost surely. %be $m$-distinct for any $1 \leq m \leq n$.
\end{lemma}
\begin{proof}
The set of points on the surface of the unit hypersphere where multiple elements are equal in magnitude is Haar measure-0, demonstrating distinctness. 

In order for $D$-distinctness to be violated, only finitely many $\ell_2$ sums of subsets of elements need be compared. Thus (by the union bound) it is enough to show that each of these violations is Haar measure-0. This holds, as the points on the unit hypersphere surface where some particular collection of elements $\ell_2$ sums to that of a different collection of elements is Haar measure-0.
\end{proof}

\begin{corollary}
    If $U$ in (i) Definition~\ref{def1a} or (ii) Definition~\ref{def1b} is obtained by a Haar random sample, then there is no free lunch almost surely.
\end{corollary}

\section{No untrained quantum circuit is very useful on average}
\label{sec:supp-analysis}

There are two key idealisations in the no free lunch theorem. Firstly, that the state is distinct and $2^{n_+}$-distinct, and secondly that every possible computational effect is treated as equally important. Taking these two in turn, we may first ask whether distinctness is really required, and a simple counter-example (to the proposition to omit the requirement of distinctness) can be constructed to show that, yes, it is.

\begin{example}
Let $n_0 = n_x = n_+ = 0$, $n_q = 3$ and $n_y$ = 1. For simplicity, we consider the output to be the probability that $\ket{y}$ yields the value $1$ when measured, and label this $p_y$. We treat the gateset as the set of all computational basis state transpositions (which is universal for computational basis state permutations). We now consider two alternative resource states, $\ket{\psi}$ and $\ket{\omega}$ that generate the same set of outputs, but with different average costs. In particular, let:
\begin{equation*}
    \ket{\psi} = \begin{bmatrix} \sqrt{0.3} \\ \sqrt{0.3}  \\ \sqrt{0.3} \\ \sqrt{0.1} \\ 0 \\ 0 \\ 0 \\ 0 \end{bmatrix} \,\,\,\,\,\,\,\,\,\,\,\,\,\,\,\,\,\,\,\,\,\,\,\, \ket{\omega} = \begin{bmatrix} \sqrt{0.3} \\ \sqrt{0.3} \\ 0 \\ 0  \\ \sqrt{0.3} \\ \sqrt{0.1} \\ 0 \\ 0  \end{bmatrix}
\end{equation*}
Consider (example) minimal circuits to prepare a state sampling $y=1$ with each $p_y$, for initial states $\ket{\psi}$ and $\ket{\omega}$:
    
    \begin{center}
    \begin{tabular}{c | l c | l c }
        $p_y$ & $\textnormal{Transpositions} (\ket{\psi})$ & $\tilde{c} (\ket{\psi}) $ & $\textnormal{Transpositions} (\ket{\omega})$ & $\tilde{c} (\ket{\omega}) $ \\
        \hline
         $1$ & $-$ & $0$ & $(3,5), (4,6)$ & $2$ \\
         $0.9$ & $(4,5)$ & $1$ & $(3,5)$ & $1$ \\
         $0.7$ & $(3,5)$ & $1$ & $(3,6)$ & $1$ \\
         $0.6$ & $(3,5) , (4,6)$ & $2$  & $-$ & $0$ \\
         $0.4$ & $(2,5), (3,6)$ & $2$ & $(2,6)$ & $1$ \\
         $0.3$ & $(2,5), (3,6), (4,7)$ & $3$ & $(2,7)$ & $1$ \\
         $0.1$ & $(1,5), (2,6), (3,7)$ & $3$ & $(1,6), (2,7)$ & $2$ \\
         $0$ & $(1,5), (2,6), (3,7), (4,8)$ & $4$ & $(1,7), (2,8)$ & $2$
    \end{tabular}
    \end{center}
Therefore $C(\ket{\psi}) = 2$ and $C(\ket{\omega}) = \frac{10}{8} = 1.25$
\end{example}

Turning to the second idealisation, theoretical computer science analysis typically allows outputs to be ``$\epsilon$-close'' to that which is desired. Ostensibly, the implication of this may appear to be a coarsening of the output, where certain computational effects (i.e., certain permutations) are treated as the same. However, upon closer inspection one can see that allowing $\epsilon$-closeness completely breaks the ability to define equivalence classes of permutations that is central to the proof. Simply put, (for example) the numbers $0$ and $1$ are both $0.75$-close to the number $0.5$, but not $0.75$-close to each other, and so allowing $\epsilon$-closeness does not merely merge certain equivalence classes, but invalidates the entire approach. 

Nevertheless, to strengthen and broaden the overall claims of this paper, it is helpful to say something about what happens if one relaxes the requirement that every possible computational effect is treated as equally important, and this can be done using simple counting arguments. For this, we need to first define a suitably relaxed notion of cost, that generalises $C$ to the case where certain permutations are treated as achieving the same computational effect, even if they are not in the same permutation equivalence class.

\begin{defn}
    For given $\mathcal{Z}$ and $n_y$, let $\mathcal{P}'$ be a subset of equivalence classes in $\mathcal{P}_{\mathcal{Z}, n_y}$ such that every computational effect (up to some desired accuracy) is achieved by at least one element of $\mathcal{P}'$. Further let $\tilde{C}$ be the minimal $C(\mathcal{P}')$ over all such subsets $\mathcal{P}'$.
\end{defn}

For the following results, it is helpful to think of the input in a more conventional sense as consisting of the uniformly random bits, the output of the resource unitary and $x$ (for the model structure where $x$ is not passed through the resource unitary -- if it is, then this is subsumed into the output of the resource unitary). Hence we define $n_{in} = n_+ + n_q + n_x$. Moreover, we assume that the output $n_y \leq n_{in}$, which can be justified by a simple information theoretic argument -- that any greater output size would be artificial in some sense, and certainly compressible into at most $n_{in}$ bits. Finally, we make the standard assumption that the entire workspace is polynomial in the input, and hence $n \in \text{Poly} (n_{in})$.

\begin{thm}
\label{thm:extra1}
    $\tilde{C} \in \tilde{\Omega} (e^{n_{in}})$.
\end{thm}
\begin{proof}
Whatever $n_y$, and whatever outputs are treated as $\epsilon$-close, it is necessarily the case that any computational effect must have at least two possible outcomes. As each of the $2^{n_{in}}$ inputs may map to any output, there are at least $2^{2^{^{n_{in}}}}$ computational effects, each of which must be prepared by a different circuit. Let $n^* = \frac{1}{2} 2^{2^{^{n_{in}}}} $ be the minimum resources needed to prepare just half of these circuits using gates from some fixed gateset of size $n_g$, each of which acts on at most $n_d$ (qu)bits ($n_d$ would typically be 2 or 3 in practice). Therefore there are at most $(n_g n^{n_d})^{n_c}$ functionally distinct circuits of exactly $n_c$ gates, and at most  
\begin{equation}
    \sum_{i = 1}^{n_c} (n_g n^{n_d})^i \leq (n_g n^{n_d})^{n_c + 1}
\end{equation}
functionally distinct circuits of at most $n_c$ gates. 

We now wish to lower-bound $n_c$ needed to prepare $n^*$ different circuits:
\begin{align}
    (n_g n^{n_d})^{n_c + 1} \geq & \frac{1}{2} 2^{2^{n_{in}}} \nonumber \\
    \implies (n_c + 1) \log_2 (n_g n^{n_d}) \geq & 2^{n_{in}} -1  \nonumber \\
    \implies n_c \geq & \frac{2^{n_{in}} - 1}{\log_2 (n_g n^{n_d})} -1 \nonumber \\
\end{align}
and so when $n_d$ and $n_g$ are constants and $n \in \text{Poly}(n)$, it is the case that to prepare $\frac{1}{2} 2^{2^{n_{in}}}$ computational effects, $n_c \in \tilde{\Omega} (e^{n_{in}})$. Thus, at least half of the computational effects must take more gates than this, and so $\tilde{C} \in \tilde{\Omega} (e^{n_{in}})$. 
\end{proof}

On the other hand,
\begin{thm}
\label{thm:extra2}
    Circuits exist such that  $\tilde{C} \in \tilde{\mathcal{O}} (e^{n_{in}})$.
\end{thm}
\begin{proof}
Using the standard reversible circuit structure, 
\begin{center}\begin{quantikz}
            \lstick{$\ket{0}$} & \qwbundle{n_y}  & \gate[2]{ P } \hphantom{wide}  & \qw  \rstick{$\ket{y(\text{input})}$} \\ 
            \lstick{$\ket{\text{input}}$} & \qwbundle{n_{in}}  & & \qw  \rstick{$\ket{\text{input}}$} 
\end{quantikz}
\end{center}
every computational effect can be prepared as a permutation on $n_{in} + n_y \leq 2 \tilde{n}_{in}$ elements. By a standard result, the largest circuit can be expressed as a product of some $2^{2 \tilde{n}_{in}} - 1$ transpositions at most. Each transposition can be implemented in $\mathcal{O}(n)$ gates \cite{herbert2023almostoptimal}, and so the total depth of any circuit is at most $\mathcal{O}(n e^{n_{in}}) \in \tilde{\mathcal{O}}(e^{n_{in}})$ gates, completing the proof.
\end{proof}

Theorem~\ref{thm:extra1} says that the range of computational effects (even with the idealisations needed for the no free lunch theorem dropped) is so large that however the input is represented (perhaps passed through an untrained quantum circuit) and whatever resource states are added as supplements, the average circuit size needed is (nearly) exponentially large \textit{at best}, whilst Theorem~\ref{thm:extra2} says that even using a naive way to implement a function, the circuit size need only be a little more than exponential \textit{at worst}. So this backs up the no free lunch theorem in a more practially-relevant setting, by demonstrating that in the absence of some \textit{a priori} reason to believe that the untrained quantum circuit is accelerating computational effects from some subset of all computational effects that we care about, there is no reason to introduce such an untrained circuit.

\section{Discussion}
\label{sec:discuss}

Though a little care and detail is needed to ensure the analysis leading up to it is rigorous, the no free lunch theorem presented in this paper amounts to a very simple principle: the exhaustive list of computational effects that can be achieved with any given machine learning workflow (model structure) is independent of any untrained resource quantum circuit almost surely. Hence, if there is no \textit{a priori} bias towards certain computational effects, then no resource quantum circuit is better than any other that can achieve the same set of computational effects. This conclusion is backed up by the general analysis in Section~\ref{sec:supp-analysis}, addressing the case where certain idealisations needed to make the no free theorem work are dropped.

This does not mean that untrained quantum circuits cannot offer some advantage in machine learning, for instance in the ways described references discussed in the introduction, but it does mean that there must be some reason why the untrained circuit is ``well matched'' to problem space in question. This could be the result of theoretical analysis or, more likely, demonstrated by empirical results. This, in turn, implies the importance of demonstrating the effectiveness of such proposals ``at scale'', as there is no reason to believe that any advantage observed in reduced-size proof-of-principle demonstrations will persist as the size increases.

Another important topic to discuss is whether the choice of cost, $c$, is appropriate. In particular in learning theory the effectiveness of machine learning models is generally measured in terms of expressivity, generalisation and trainability. As $c$ -- as illustrated by item (ii) in the enumeration at the opening of Section~\ref{subsec:cost-target} -- can be defined as monotonically decreasing with the number of computational effects that can be prepared within a certain number of gates (i.e., a certain model capacity), it can correspond directly to expressivity. The other two measures (generalisation and trainability) are data-dependent, and so any no free lunch theorem would entail averaging over every input dataset.

However, one can argue that $c$ defined as the total gate count is a proxy for trainability and generalisation, by considering the following counterfactual. If it were the case that certain resource circuits significantly reduced the average number of gates needed to prepare every possible computational effect, then this would translate as smaller models being needed. This, in turn, would imply that training should be easier and overfitting less likely. Nevertheless, further analysis would be needed to give a more precise conclusion on these fronts.

As a final discussion point, we return to the question of why one may ever believe that certain quantum circuits do possess superior resource value. Most evidently, one may suppose that highly-entangled highly ``non-classicall'' quantum circuits / states would have superior resource value, as they would enable access to parts of the Hilbert space that are inaccesible by purely classical means. One may even go as far as to posit that something like entropy of entanglement \cite{Bennett1996} should measure the resource value. However, one can easily see that a sufficient condition for two $n_q$-qubit candidate resource states to prepare the same family of target distributions is that they differ by a computational basis state permutation (so when written as $2^{n_q}$ element vectors, the same values appear in each case, but with the elements they reside in permuted). As (for example) CNOT gates are both entangling and permutations it is easy to construct examples of a pair of states that differ by a computational basis state permutation where one state is a tensor-product state and the other is (potentially highly) entangled.

\section*{Acknowledgements}

Thanks to Alexandre Krajenbrink and Julien Sorci for reviewing a draft of this article and providing many valuable suggestions.

% Generated by IEEEtran.bst, version: 1.14 (2015/08/26)

% \bibliographystyle{IEEEtran}
% \bibliography{mybib}

\end{document}